\newtheorem{proposition}{Proposition}
\newtheorem{lemma}{Lemma}
\newtheorem{theorem}{Theorem}
\def\openone{\leavevmode\hbox{\small1 \normalsize \kern-.64em1}}
\def\identity{\leavevmode\hbox{\small1\kern-3.8pt\normalsize1}}
\newcommand{\E}{\ensuremath \mathbb E}
\renewcommand{\Pr}{\ensuremath \mathrm {Pr}}
\newcommand{\Tr}{\operatorname{Tr}}
\begin{document}
\title{Lower Bounds on the Communication Complexity of Binary Local Quantum Measurement Simulation}
\author{Adrian Kosowski}
\affiliation{Inria Bordeaux Sud-Ouest --- LaBRI, 351 cours de la Lib\'eration, 33405 Talence Cedex, France}
\author{Marcin Markiewicz}
\affiliation{Institute of Theoretical Physics and Astrophysics, University of Gda\'nsk, 80-952 Gda\'nsk, Poland}

\begin{abstract}
We consider the problem of the classical simulation of quantum measurements in the scenario of communication complexity. Regev and Toner (2007) have presented a 2-bit protocol which simulates one particular correlation function arising from binary projective quantum measurements on arbitrary state, and in particular does not preserve local averages. The question of simulating other correlation functions using a protocol with bounded communication, or preserving local averages, has been posed as an open one. Within this paper we resolve it in the negative: we show that any such protocol must have unbounded communication for some subset of executions. In particular, we show that for any protocol, there exist inputs for which the random variable describing the number of communicated bits has arbitrarily large variance.
\end{abstract}																														

\maketitle

\emph{Introduction.} It is known since Bell's seminal paper \cite{B64}, that quantum mechanics of many parties predicts correlations of measurement outcomes that cannot be reconstructed using only locally computable functions without any communication between the parties. This property is commonly referred to as \emph{quantum nonlocality} (not to be confused with nonlocality in the sense of special relativity, cf.~\cite{G09} for a comprehensive discussion of the topic). The degree of quantum nonlocality in a computational setting can be intuitively described using the scenario of communication complexity, in which one asks how much communication is needed to perform some distributed task. From this perspective, the communication cost of classically simulating quantum measurements is one of the most important approaches to characterizing the fundamental difference between classical and quantum correlations \cite{BCMW10}.

All further considerations are performed using the scenario of \emph{randomized communication complexity}  proposed by Yao \cite{Y79}. In a bipartite version of this scenario, two separate processors with no bounds on local memory and computational efficiency are considered. They can share random numbers (a so-called public coin), and can exchange bits with each other. They are given inputs (which in case of Bell inequalities scenario correspond to measurement settings), and their task is to produce discrete outputs. The statistical distribution of the outputs with respect to used random resources is required to fulfill some task distribution and the algorithm should be performed with minimal possible amount of communicated bits. 

Within the scenario of randomized bipartite communication complexity we consider the problem of simulating local quantum measurements, in which Alice and Bob are given as an input the classical description of their own local observable and of the entire quantum state, and their task is to output some numbers, whose statistical distribution is to some extent consistent with the distribution of measurement outcomes predicted by quantum theory. The problem can be treated in two different variants. In the first one ($CS$ -- Correlation Simulation) the task is to perfectly simulate correlations of outcomes of all possible quantum measurements, whereas no restrictions are imposed on local averages of outputs. The second approach ($LQMS$ -- Local Quantum Measurement Simulation) demands both the perfect simulation of correlations and the perfect simulation of local quantum averages.

The communication complexity of the above-defined problems is not known in general, and only partial results have been proposed. In the case of the first approach ($CS$), the problem has been solved in general for binary (two-outcome) projective measurements on an arbitrary quantum state by Regev and Toner \cite{RT07}, who proposed a bounded-worst-case randomized protocol which solves the problem with two bits of one-way communication. This result saturates the lower bound derived in \cite{VB09}, who showed that one bit is not sufficient. In the case of $d$-outcome projective measurements, a lower bound of $\Omega(d)$ bits of communication is known \cite{BCT99}, whereas for the special case of arbitrary POVM measurements on $n$ Bell states shared between Alice and Bob a protocol exists which uses $O(n 2^n)$ bits of communication on average \cite{MBCC01}. In the case of $LQMS$ only the case of a maximally entangled state of two qubits has been solved. Toner and Bacon \cite{TB03} have shown that in case of projective measurements on such a state, there exists a bounded-worst-case one-bit protocol, whereas in case of aribtrary POVM measurements there exists a protocol which uses six bits in expectation \cite{DLR07}.

In this paper we focus on the $LQMS$ problem for binary (two-outcome) observables, which we refer to as $BLQMS$. We construct a lower bound on all the statistical moments of the random variable describing the number of communicated bits in any protocol which exactly performs $BLQMS$. This lower bound implies that for any protocol solving $BLQMS$, the number of communicated bits has unbounded variance (as well as all higher moments) with respect to the dimension of the state space. This implies not only that there is no bounded-worst-case protocol for $BLQMS$ which works for arbitrary dimension of the state space, but also that for any $BLQMS$ protocol working in finite time the expected number of bits has a large dispersion. To the best of our knowledge, the presented characterization of communication complexity of $BLQMS$ is the most general among all already proposed, and gives an important insight into the characterization of \emph{quantum nonlocality} from the perspective of communication complexity.

\emph{Basic concepts and methods.} We formally define the Binary Local Quantum Measurement Simulation ($BLQMS$) problem as follows. The inputs of Alice and Bob are the pairs $(A,\sigma)$ and $(B, \sigma)$, where $\sigma$ is a density matrix of dimension $n^2 \times n^2$, representing an arbitrary state of two $n$-dimensional quantum systems, and $A$ and $B$ are Hermitian operators with spectrum $\{-1,1\}$. Alice and Bob are equipped with an infinite source of shared randomness. The protocol is randomized and the task of Alice and Bob is to output values $y_A, y_B \in \{-1,1\}$, respectively, such that the corresponding random variables $\mathbf{y}_A$, $\mathbf{y}_B$, taken over multiple runs of the protocol, have correlations and local averages statistically indistinguishable from those resulting from a quantum measurement of state $\sigma$ with the measurement operator $A \otimes B$, i.e.:
\begin{eqnarray}
&& \E (\mathbf{y}_A \mathbf{y}_B) = \Tr ((A\otimes B)\sigma),\nonumber\\
&& \E (\mathbf{y}_A) = \Tr ((A\otimes \openone)\sigma),\nonumber\\
&& \E (\mathbf{y}_B) = \Tr ((\openone\otimes A)\sigma).
\label{blqmsexp}
\end{eqnarray}
For our purposes the above requirements for expected values can be equivalently reformulated as conditions on joint probabilities. Indeed, any Hermitian operator with spectrum  $\{-1,1\}$ can be represented in terms of multidimensional projectors:
\begin{eqnarray}
A &=& 2P_A - \openone\nonumber\\
B &=& 2P_B - \openone.
\end{eqnarray}
Denoting by $p_{\alpha,\beta}$, $\alpha, \beta = \{-1,+1\}$, the joint probability:
\begin{equation}
p_{\alpha,\beta} \equiv \Pr [\mathbf y_A = \alpha \wedge \mathbf y_B = \beta],
\end{equation}
we can formulate the following set of conditions:
\begin{eqnarray}
&& p_{1,1} = \Tr ((P_A\otimes P_B)\sigma)\nonumber\\
&& p_{-1,1} = \Tr (((\openone-P_A)\otimes P_B)\sigma)\nonumber\\
&& p_{1,-1} = \Tr ((P_A\otimes (\openone-P_B)\sigma)\nonumber\\
&& p_{-1,-1} = 1 - p_{1,1} - p_{-1,1} - p_{1,-1}.
\label{blqmsprob}
\end{eqnarray}
Note that by \eqref{blqmsexp}-\eqref{blqmsprob}, we have
\begin{eqnarray}
&& \E (\mathbf{y}_A \mathbf{y}_B) = 1 - 2p_{-1,1} - 2p_{1,-1}\nonumber\\
&& \E (\mathbf{y}_A) = -1 + 2p_{1,1} + 2p_{1,-1}\nonumber\\
&& \E (\mathbf{y}_B) = -1 + 2p_{1,1} + 2p_{-1,1}.
\end{eqnarray}
Thus, the tuples $(\E (\mathbf{y}_A \mathbf{y}_B), \E (\mathbf{y}_A), \E (\mathbf{y}_B))$ and $(p_{1,1}, p_{-1,1}, p_{1,-1})$ can be uniquely determined from each other by linear transformation, and we have the following claim.
\begin{proposition}
For any given inputs $(A,\sigma)$ and $(B,\sigma)$, a $BLQMS$ protocol preserves the set of conditions (\ref{blqmsexp}) on expected values of outcomes in quantum measurement simulation if and only if it preserves the set of conditions (\ref{blqmsprob}) on joint probabilities.
\label{propo1}\qed
\end{proposition}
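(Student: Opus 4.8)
The plan is to exploit the fact that both the statistical quantities produced by the protocol and the quantum-theoretic target values transform under one and the same affine map when passing between the expectation description and the probability description; invertibility of that map then yields the equivalence.

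First I would make explicit the affine map $T$ determined by the three displayed relations immediately preceding the statement, which sends the probability triple $(p_{1,1}, p_{-1,1}, p_{1,-1})$ to the expectation triple $(\E(\mathbf{y}_A\mathbf{y}_B), \E(\mathbf{y}_A), \E(\mathbf{y}_B))$. Its linear part is the matrix with rows $(0,-2,-2)$, $(2,0,2)$, $(2,2,0)$, whose determinant equals $-16 \neq 0$; hence $T$ is an affine bijection of $\mathbb{R}^3$. This is precisely the ``linear transformation'' already asserted in the text, now recorded as a genuine invertibility statement.

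The key step is to check that the quantum-theoretic right-hand sides transform under the very same map $T$. Substituting $A = 2P_A - \openone$ and $B = 2P_B - \openone$ into the trace expressions of \eqref{blqmsexp} and expanding $A\otimes B$, $A\otimes\openone$, and $\openone\otimes B$, together with the identities $\Tr((P_A\otimes\openone)\sigma) = \Tr((P_A\otimes P_B)\sigma) + \Tr((P_A\otimes(\openone-P_B))\sigma)$ and the analogous decomposition of $\Tr((\openone\otimes P_B)\sigma)$, I would verify that the quantum expectation targets are exactly $T$ applied to the quantum probability targets $\big(\Tr((P_A\otimes P_B)\sigma),\, \Tr(((\openone-P_A)\otimes P_B)\sigma),\, \Tr((P_A\otimes(\openone-P_B))\sigma)\big)$ appearing on the right of \eqref{blqmsprob}. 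This is the crux of the argument: it is not enough that the protocol's statistics transform via $T$; the targets must transform via the same $T$, since otherwise the two equivalences would fail to line up. The computation is the operator-algebra analogue of the one already carried out for the statistical quantities, so no new idea is required, only care with the $\otimes\openone$ marginalizations.

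Finally I would assemble the equivalence. Writing $\mathbf{p}$ for the protocol's probability triple and $\mathbf{q}$ for the quantum probability triple, condition \eqref{blqmsexp} reads $T(\mathbf{p}) = T(\mathbf{q})$, while the first three lines of \eqref{blqmsprob} read $\mathbf{p} = \mathbf{q}$. The fourth line of \eqref{blqmsprob} is the normalization $p_{-1,-1} = 1 - p_{1,1} - p_{-1,1} - p_{1,-1}$, which holds automatically because the outputs $(\mathbf{y}_A,\mathbf{y}_B)$ form a probability distribution and, on the quantum side, because the four projectors $P_A\otimes P_B$, $(\openone-P_A)\otimes P_B$, $P_A\otimes(\openone-P_B)$, $(\openone-P_A)\otimes(\openone-P_B)$ sum to the identity while $\Tr\sigma = 1$; thus matching the first three components is equivalent to matching all four. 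Since $T$ is a bijection, its common constant offset cancels and $T(\mathbf{p}) = T(\mathbf{q})$ holds if and only if $\mathbf{p} = \mathbf{q}$, which is exactly the asserted equivalence between \eqref{blqmsexp} and \eqref{blqmsprob}.
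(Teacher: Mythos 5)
Your proposal is correct and takes essentially the same approach as the paper, which treats the proposition as immediate (note the \qed with no separate proof) from the displayed relations showing that the tuples $(\E (\mathbf{y}_A \mathbf{y}_B), \E (\mathbf{y}_A), \E (\mathbf{y}_B))$ and $(p_{1,1}, p_{-1,1}, p_{1,-1})$ determine each other by an invertible linear transformation. You merely make explicit what the paper leaves implicit: the invertibility of that affine map, the verification that the quantum-theoretic targets in \eqref{blqmsexp} and \eqref{blqmsprob} transform under the same map, and the automatic satisfaction of the normalization condition in the last line of \eqref{blqmsprob}.
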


In the proof of the main result we make use of the lower bound for \emph{nondeterministic communication complexity} of the Deutsch-Jozsa problem~\cite{DJ92}, which in the  vector-based representation can be formulated as follows. Alice and Bob are given an $n$-bit vector $a$ and $b$ respectively, with each coordinate equal to $-1$ or $+1$: $a, b \in \{-1,1\}^n$, with $n$ even. A promise is made on the value of the scalar product of the inputs, which may take one of two values: $a\cdot b \in \{0, n\}$. The task of Alice and Bob is to decide which of the two cases holds, \emph{i.e.}, whether $a \cdot b = 0$ (equivalently, if $a$ and $b$ are equal on exactly $n/2$ coordinates) or $a\cdot b = n$ (equivalently, if $a$ and $b$ are equal on all $n$ coordinates). Formally, we can define the task function:
\begin{equation}
f(a,b) = [a\cdot b = n],
\end{equation}
where the square bracket denotes the logical value.

We recall that the \emph{nondeterministic communication complexity} of the function $f$ for accepting (rejecting), denoted $N^1(f)$ (respectively, $N^0(f)$), is the size of the smallest string which can be provided by an oracle to both Alice and Bob to convince them with certainty that $f(a, b) = 1$ (respectively, $f(a,b) = 0$). The following lower bound on $N^1(f)$ holds.

\begin{proposition}\label{pro:djn}
$N^1(f) \geq 0.007 n / (\log_2 n+3) - 1$.
\end{proposition}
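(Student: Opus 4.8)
The plan is to bound $N^1(f)$ through the standard rectangle-cover characterisation of nondeterministic communication complexity: $N^1(f) = \lceil \log_2 C^1(f)\rceil$, where $C^1(f)$ is the minimum number of combinatorial rectangles $S\times T$, with $S,T\subseteq\{-1,1\}^n$, that contain no $0$-input (no pair $a\in S$, $b\in T$ with $a\cdot b = 0$) and whose union covers every $1$-input. Since $a\cdot b = n$ holds precisely when $a=b$, the set of $1$-inputs is the diagonal $\{(a,a): a\in\{-1,1\}^n\}$, of size $2^n$. It therefore suffices to show that many rectangles are needed to cover these $2^n$ diagonal points, and to this end I would first determine how many diagonal points a single admissible rectangle can cover.

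First I would observe that a rectangle $S\times T$ covers exactly the diagonal points indexed by $U := S\cap T$, and that it is admissible only if $a\cdot b\neq 0$ for all $a,b\in U$ (taking $a\in S$ and $b\in T$); that is, $U$ must be a \emph{pairwise non-orthogonal} family, equivalently a family of $\pm 1$ vectors no two of which lie at Hamming distance exactly $n/2$. Conversely $U\times U$ is always admissible, so the maximal number of diagonal points in one admissible rectangle equals the maximal size $m(n)$ of such a family. A simple averaging argument then gives $C^1(f)\ge 2^n/m(n)$, whence $N^1(f)\ge n-\log_2 m(n)$.

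The crux, and the step I expect to be the main obstacle, is an effective extremal estimate for $m(n)$. A family of $\pm 1$ vectors with no two at Hamming distance exactly $n/2$ has, by the Frankl--R\"odl forbidden-distance theorem, size at most $2^{(1-\delta)n}$ for an absolute constant $\delta>0$; this already yields a linear lower bound on $N^1(f)$ and is far more than the claim requires. To obtain the explicit and self-contained bound stated, I would instead use an elementary counting/pigeonhole version of the forbidden-distance argument, which costs a $\log_2 n$ factor and gives $\log_2 m(n)\le n-0.007\, n/(\log_2 n+3)$, with the additive $-1$ coming from the rounding in that estimate. Substituting into $N^1(f)\ge n-\log_2 m(n)$ then yields the proposition. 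The delicate part is keeping the constant fully explicit and valid for all even $n$, rather than settling for an asymptotic $(2-\delta)^n$ statement, since only an effective extremal bound feeds a concrete lower bound into the communication argument.
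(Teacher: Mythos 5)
Your approach is genuinely different from the paper's. The paper never touches rectangles directly: it combines the Aho--Ullman--Yannakakis inequality $D(f) \leq (N^0(f)+1)(N^1(f)+1)$ \cite{AUY83} with the cited deterministic bound $D(f) \geq 0.007n$ of \cite{BCW98} and the cheap rejection certificate $N^0(f) \leq \lceil \log_2 n \rceil + 1$ (a coordinate where $a$ and $b$ differ), which is where the $\log_2 n + 3$ denominator comes from. You instead lower-bound the $1$-cover number directly: the $1$-inputs form the diagonal, an admissible rectangle covers at most $m(n)$ diagonal points where $m(n)$ is the largest pairwise non-orthogonal family, hence $N^1(f) \geq n - \log_2 m(n)$, and Frankl--R\"odl gives $m(n) \leq 1.99^n$. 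This route is correct (for $n$ divisible by $4$; see below) and is in fact sharper and more direct than the paper's: it yields $N^1(f) \geq n \log_2(2/1.99) > 0.007\,n$, with no $\log n$ loss. Indeed, the bound $D(f) \geq 0.007n$ that the paper quotes is itself proved in \cite{BCW98} by exactly your rectangle-plus-Frankl--R\"odl argument, so the paper's detour through the AUY inequality simply discards a $\log n$ factor in exchange for keeping the hard combinatorics inside a citation; citing Frankl--R\"odl is no less ``black-box'' than citing \cite{BCW98}.

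Two genuine problems remain in your write-up, however. First, your fallback step --- the claim that an ``elementary counting/pigeonhole version of the forbidden-distance argument'' gives $\log_2 m(n) \leq n - 0.007\,n/(\log_2 n + 3)$ --- is unsupported, and no such elementary argument is known: bounding families that avoid the single distance $n/2$ is precisely the hard content of the Frankl--R\"odl theorem, and there is no pigeonhole proof that loses only a logarithmic factor. If you insist on self-containedness, this is a real gap; the fix is simply to keep Frankl--R\"odl as a cited theorem, exactly as the paper keeps \cite{BCW98}. Second, your closing claim that the estimate can be kept ``valid for all even $n$'' is false, and no proof can repair it: if $n \equiv 2 \pmod 4$, then $n/2$ is odd, and the $2^{n-1}$ vectors of even Hamming weight have all pairwise distances even, hence avoid distance $n/2$; thus $m(n) \geq 2^{n-1}$, and correspondingly $C^1(f) \leq 2$ (use the even-weight and odd-weight rectangles) and $N^1(f) \leq 1$. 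So your argument, like Frankl--R\"odl itself, requires $n \equiv 0 \pmod 4$. (The paper's proposition carries the same unstated restriction, since the bound of \cite{BCW98} also needs $4 \mid n$, but you should state the restriction rather than assert the contrary.)
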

\begin{proof}
By the classic theorem of Aho, Ullman, and Yannakakis~\cite{AUY83}, we have:
$$
D(f) \leq (N^0(f)+1)(N^1(f)+1),
$$
where $D(f)$ is the deterministic two-party communication complexity of the function $f$. It is known~\cite{BCW98} that for the considered function $f$, $D(f) \geq 0.007n$. Moreover, we trivially have $N^0(f) \leq \lceil \log_2 n \rceil + 1$, since for a rejecting input (with vectors $a$ and $b$ differing on exactly half of the coordinates), the oracle can communicate to Alice and Bob the pair $(i, \alpha)$, with $i\in \{1,\ldots,n\}$, $\alpha \in \{-1,1\}$, representing a coordinate $i$ such that $a_i = \alpha$ and $b_i = -\alpha$; the existence of such a coordinate can be verified by Alice and Bob for each of these two equalities, respectively. The claim follows.
\end{proof}


We proceed to show that any protocol which solves $BLQMS$ for dimension $n$ can be used as a black-box for designing an oracle verification protocol in the nondeterministic accepting Deutsch-Jozsa problem. More precisely, we will show that for any protocol $BLQMS$ which communicates $\Omega(n/\log_2 n)$ bits with small probability ($o(1/n)$ for all inputs), there would exist sets of inputs of dimension $n$ and corresponding initializations of the (shared) randomness of this protocol which could be used to solve Deutsch-Jozsa on all $n$-bit vectors with nondeterministic communication complexity $N^1(f) = o(n / \log_2 n)$, a contradiction with Proposition~\ref{pro:djn}.

\emph{Construction of the lower bound.} Using the equivalent characterization of the $BLQMS$ (Proposition~\ref{propo1}), for the purpose of the lower bound we define a restricted version $\overline{BLQMS}$ of the protocol, in which we only require that the first of the conditions  \eqref{blqmsprob} is preserved, namely that:
\begin{equation}
\Pr\ [\mathbf y_A = 1 \wedge \mathbf y_B = 1] = \Tr ((P_A\otimes P_B)\sigma).
\label{p11}
\end{equation}
We consider the special class of inputs by adding the promise that $\sigma$ is a maximally entangled $n$-dimensional state, whereas $P_A$ and $P_B$ are $1$-dimensional projectors, of the form $P_A = \frac{1}{n}|a\rangle\langle a|$ and $P_B = \frac{1}{n}|b\rangle\langle b|$, where $(a, b)$ is a valid pair of inputs to the vector version of the Deutsch-Jozsa problem, i.e., satisfying $a,b \in \{-1,+1\}^n$ and $a\cdot b\in\{0,n\}$. For the maximally entangled state, the constraint \eqref{p11} can be rewritten as:
\begin{equation}
\Pr\ [\mathbf y_A = 1 \wedge \mathbf y_B = 1]= \frac{1}{n^2} (a\cdot b) = \frac{1}{n} f(a,b),
\label{p11n}
\end{equation}
where the function $f$ corresponds to a solution to the Deutsch-Jozsa problem.

In the above, the probability distribution is governed by the choice of randomness: the shared randomness accessible to Alice and Bob, and the private randomness (``coin tosses'') used by Alice and Bob individually during the execution of their protocol. Without loss of generality, we can in fact assume that all randomness is shared, and that the random choices made during an execution of the protocol are parameterized by a random vector $\lambda \in X$, where $\lambda$ is sampled uniformly from some probabilistic space $X$ with measure $\mu$, independent of the inputs $a$ and $b$. The pairs $(a, \lambda)$ and $(b, \lambda)$, satisfying $a,b \in \{-1,+1\}^n$, $a\cdot b\in\{0,n\}$, and $\lambda\in X$, will now be treated as the inputs of our restricted version $\overline{BLQMS}$ of the $BLQMS$ problem. The deterministic protocol solving $\overline{BLQMS}$ must decide on a value $g(a,b,\lambda) = [y_A = 1 \wedge y_B = 1] \in \{0,1\},$ and it will be considered correct if it achieves the correct probability distribution of outputs over choices of $\lambda$:
$$
\E_\lambda\  g(a,b,\lambda) = \frac{1}{n} f(a,b),
$$
where we use the notation: $\E_\lambda\ \cdot \equiv \int_{\lambda\in X}\ \cdot\ d\mu.$ We make the technical assumption that the considered protocols are always such that the function $g$ is measurable; otherwise, the concept of expectation of outcomes is hard to define.

Fix a deterministic protocol solving $\overline{BLQMS}$. Let $T(a, b, \lambda)$ be the number of bits communicated by Alice and Bob in total during the execution of this protocol for inputs $(a, \lambda)$ and $(b, \lambda)$. We will now show that the the probability distribution of $T(a, b, \lambda)$ has a heavy tail over $\lambda$ for some input pairs, regardless of the considered protocol for $\overline{BLQMS}$.

We first show that the ``power'' of the shared randomness of $\overline{BLQMS}$ in the context of our lower-bound argument is limited, and corresponds, roughly speaking, to $O(\log n)$ bits of additional information received from the oracle. We start with a lemma which provides a convenient partition of the set of input vectors $\{-1,1\}^n$ for $\overline{BLQMS}$ into $O(n^2)$ subsets, given a bound on the tail of the communication complexity of the $\overline{BLQMS}$ protocol. Each of the subsets of the partition will be associated with a corresponding random vector $\lambda$ for which the output of the  $\overline{BLQMS}$ protocol is ``useful'' from the perspective of simulating Deutsch-Jozsa. Since the number of subsets in the partition is polynomial in $n$, given any input for Alice and Bob, an oracle for Deutsch-Jozsa will be able to prompt the right value of $\lambda$ to Alice and Bob using $O(\log n)$ bits of additional information. (We remark that eliminating shared randomness is a well-known concept in communication complexity~\cite{N91}, but our case concerns a different scenario of non-deterministic communication complexity.) 

\begin{lemma}\label{theLemma}
Assume that for some input size $n$ and some $M > 0$, the considered protocol for $\overline{BLQMS}$ on any pair of input vectors  will have a message complexity not smaller than $M$ with probability at most $\frac 1 {2n}$:
$$\forall_{a,b \in \{-1,+1\}^n} \mu (\{\lambda : \  T(a,b,\lambda) \geq  M \}) < \frac 1 {2n}.$$
Then there exists a partition of the set of $n$-dimensional input vectors: $\{-1,1\}^n = S_1 \cup \ldots \cup S_{2 n^2}$, such that each of the sets of inputs $S_j$ is associated with a fixed vector $\lambda_j \in X$ such that the following properties are fulfilled:
\begin{enumerate}
\item $g(a,a,\lambda_j) = 1, \quad \text{ for all } a\in S_j.$
\item $T(a,a,\lambda_j) < M, \quad \text{ for all } a\in S_j.$
\end{enumerate}
In other words, $\lambda_j \in X$ leads to an ``accepting'' outcome to $\overline{BLQMS}$ within less than $M$ communicated bits when Alice and Bob both receive the same input vector from $S_j$.
\end{lemma}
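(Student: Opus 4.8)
The plan is to exploit the fact that when Alice and Bob receive identical inputs $a=b$, the promise forces $a\cdot a = n$, so that $f(a,a)=1$. By the correctness condition, $\E_\lambda\, g(a,a,\lambda) = \frac{1}{n} f(a,a) = \frac{1}{n}$, and since $g$ is $\{0,1\}$-valued this means $\mu(\{\lambda : g(a,a,\lambda)=1\}) = \frac{1}{n}$ for \emph{every} input $a$. Combining this with the hypothesis $\mu(\{\lambda : T(a,a,\lambda)\geq M\}) < \frac{1}{2n}$ and subtracting off the bad event, the set of ``good'' randomness strings for input $a$,
$$G_a = \{\lambda : g(a,a,\lambda)=1 \text{ and } T(a,a,\lambda) < M\},$$
satisfies $\mu(G_a) > \frac{1}{n} - \frac{1}{2n} = \frac{1}{2n}$. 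Thus for every single input there is an abundance---measure more than $\frac{1}{2n}$---of values $\lambda$ realizing both required properties simultaneously.

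Next I would build the partition greedily, one set $S_j$ and its witness $\lambda_j$ at a time, using a double-counting argument to guarantee that each step covers a constant fraction of the inputs not yet assigned. Let $U$ denote the set of inputs still unassigned at the current stage, and for each $\lambda$ let $A_\lambda = \{a : g(a,a,\lambda)=1 \text{ and } T(a,a,\lambda)<M\}$ be the set of inputs ``served'' by $\lambda$. By Fubini,
$$\E_\lambda\, |A_\lambda \cap U| = \sum_{a\in U} \mu(G_a) > \frac{|U|}{2n},$$
so there exists some $\lambda_j$ with $|A_{\lambda_j}\cap U| > \frac{|U|}{2n}$. Setting $S_j = A_{\lambda_j}\cap U$ and removing it from $U$, properties (1) and (2) hold for $S_j$ with witness $\lambda_j$ by construction, while the number of unassigned inputs shrinks by a factor of at least $(1-\frac{1}{2n})$ at each step.

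Finally I would bound the number of steps needed to exhaust all $2^n$ inputs. After $k$ greedy steps the unassigned set has size at most $2^n(1-\frac{1}{2n})^k$; taking $k=2n^2$ and using $(1-\frac{1}{2n})^{2n}\leq e^{-1}$ gives $2^n(1-\frac{1}{2n})^{2n^2} \leq 2^n e^{-n} = (2/e)^n < 1$. Since the number of unassigned inputs is a non-negative integer strictly less than $1$, it equals $0$: every input is assigned within $2n^2$ steps, and padding with empty sets (if fewer steps suffice) yields the desired partition $\{-1,1\}^n = S_1 \cup \cdots \cup S_{2n^2}$.

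I expect the only delicate point to be the quantitative bookkeeping in the greedy step---verifying that the per-input bound $\mu(G_a) > \frac{1}{2n}$ survives the double-counting so that a fraction strictly exceeding $\frac{1}{2n}$ of the remaining inputs is covered at every stage, which is exactly what makes $2n^2$ sets suffice. The underlying conceptual observation driving the whole reduction is that the diagonal inputs $a=b$ are precisely the accepting instances of Deutsch--Jozsa, and that the $\frac{1}{n}$ acceptance probability can be concentrated onto only polynomially many witnesses $\lambda_j$.
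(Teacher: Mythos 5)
Your proof is correct and follows essentially the same argument as the paper: the same observation that diagonal inputs $a=b$ give $\mu(G_a) \geq \frac{1}{2n}$ of good randomness, the same greedy extraction of witnesses $\lambda_j$ via an averaging (pigeonhole/Fubini) argument, and the same $(1-\frac{1}{2n})$ shrinkage bound showing $2n^2$ iterations suffice. The only cosmetic differences are your explicit padding with empty sets and the cleaner closed-form estimate $2^n(1-\frac{1}{2n})^{2n^2} \leq (2/e)^n < 1$, both of which the paper handles implicitly.
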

\begin{proof}
Let $A_0 = \{-1,1\}^n$ be the set of $n$-dimensional vectors, with $|A_0| = 2^n$. Consider any input to the $\overline{BLQMS}$ with $f(a,b)=1$, i.e., with $a=b \in A_0$. Let:
$$\Lambda(a) = \{\lambda \in X :  g(a,a,\lambda) = 1 \wedge T(a,a,\lambda) < M\}.$$
We have:
$$
\mu (\{\lambda \in X :  g(a,a,\lambda) = 1\}) = \E_\lambda\  g(a,a,\lambda) = \frac 1 n
$$
and by the assumption of the Lemma:
$$
\mu (\{\lambda \in X :  T(a,a,\lambda) \geq M \}) < \frac 1 {2n},
$$
hence $\mu(\Lambda(a)) \geq \frac 1 n - \frac 1 {2n} = \frac 1 {2n}$. Let $c_0 : X \to \{0,\ldots, 2^n\}$ be a function which assigns to each $\lambda$ the number of inputs $a\in A_0$ for which randomness $\lambda$ leads to an ``accepting'' outcome, formally, $c_1 = |S_1(\lambda)|$ with $S_1(\lambda) = \{a \in A_0 : \lambda \in \Lambda(a)\}$. Function $c_1$ is clearly measurable, and: $\E_\lambda c_1 = \sum_{a\in A_0} \mu(\Lambda(a)) = |A_0| / n = 2^n / (2n)$. By the pigeon-hole principle, it follows that there exists $\lambda_1 \in X$ with $c_1(\lambda_1) \geq  2^n / (2n)$ and so $|S_1(\lambda_1)| \geq 2^n/ (2n)$. Let $A_1$ be defined as the set of inputs for which randomness $\lambda_1$ does not lead to an accepting outcome, $A_1 = A_0 \setminus S_1(\lambda_1)$, with $|A_1| \leq |A_0| (1-1/(2n))$. We iterate the process, defining in general $A_{i} = A_{i-1} \setminus S_i(\lambda_i)$, where $S_i(\lambda_i) = \{a \in A_{i-1} : \lambda_i \in \Lambda(a)\}$ and $\lambda_i$ is chosen so that $|S_i(\lambda_i)| \geq |A_{i-1}| / (2n)$. It follows that $|A_{i}| \leq |A_{i-1}| (1-1/(2n))$, and so $|A_{i}| \leq |A_1| (1-1/(2n))^{i-1} < 2^n e^{-i/(2n)} = e^{n \ln 2 - i/(2n)}$. The process is completed when $|A_i| < 1$, i.e., $(2\ln 2) n  - i/n < 0$, or $i > (2\ln 2) n^2$ (where $2\ln 2 < 2$). Thus, the family of sets $S_1(\lambda_1), \ldots, S_{i}(\lambda_{i})$, for i = $2 n^2$, constitutes a partition of the set $A_0$ of $n$-dimensional vectors.
\end{proof}

We are now ready to provide our main technical theorem, showing that communication complexity distribution of any $\overline{BLQMS}$ protocol has a heavy tail.
\begin{theorem}\label{theClaim}
Fix any even value $n > 10^7$ and let $M(n) = 0.003 n/\log_2 n$.  For any protocol for $\overline{BLQMS}$, there exists a pair of input vectors $(a,b)$ of length $n$ for such that the protocol requires at least $M(n)$ communicated bits for this input pair with probability at least $\frac 1 {2n}$, taken over the shared randomness, i.e.:
$$\mu (\{\lambda : \  T(a,b,\lambda) \geq  M(n) \}) \geq \frac 1 {2n}.$$
\end{theorem}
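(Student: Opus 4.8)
The plan is to argue by contradiction, converting a $\overline{BLQMS}$ protocol whose communication tail is uniformly light into a short nondeterministic certificate scheme for the accepting Deutsch--Jozsa problem, and then to invoke Proposition~\ref{pro:djn}. Concretely, I would assume the negation of the statement: that for $M=M(n)$ every input pair satisfies $\mu(\{\lambda : T(a,b,\lambda)\ge M\}) < \frac{1}{2n}$. This is exactly the hypothesis of Lemma~\ref{theLemma}, so it yields a partition $\{-1,1\}^n = S_1\cup\cdots\cup S_{2n^2}$ together with witnesses $\lambda_1,\dots,\lambda_{2n^2}\in X$ such that for every $a\in S_j$, running the protocol with both players receiving $a$ and sharing randomness $\lambda_j$ gives $g(a,a,\lambda_j)=1$ within fewer than $M$ communicated bits.

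Next I would build the nondeterministic accepting protocol for $f$. Given inputs $a$ (to Alice) and $b$ (to Bob) with the promise $a\cdot b\in\{0,n\}$, each player first constructs the corresponding $\overline{BLQMS}$ input (the one-dimensional projector from their vector and the maximally entangled state), and the oracle's certificate consists of an index $j\in\{1,\dots,2n^2\}$, encoded in $\lceil\log_2(2n^2)\rceil$ bits, together with the full transcript $\tau$ of the deterministic protocol obtained by hard-wiring $\lambda_j$. The index $j$ tells both players which $\lambda_j$ to use; Alice then checks that $\tau$ is consistent with her own side of the execution on $(a,\lambda_j)$ and drives her output to $y_A=1$, and Bob checks the analogous condition for $(b,\lambda_j)$ and $y_B=1$; both accept precisely when these local checks succeed. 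For completeness, if $f(a,b)=1$ then $a=b$ lies in some $S_j$, and by the two properties of Lemma~\ref{theLemma} the execution under $\lambda_j$ produces $y_A=y_B=1$ with a transcript of length below $M$, which the oracle presents. The certificate length is then at most $M(n)+\lceil\log_2(2n^2)\rceil+O(\log n)$, the last term making the encoding self-delimiting.

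For soundness I would use the standard fact that a transcript simultaneously consistent with Alice's local view on $(a,\lambda_j)$ and Bob's local view on $(b,\lambda_j)$ must be the genuine transcript of their joint execution; hence acceptance of any certificate $(j,\tau)$ forces $g(a,b,\lambda_j)=1$. The delicate point, which I expect to be the main obstacle, is ruling this out for rejecting inputs: the correctness condition $\E_\lambda\, g(a,b,\lambda)=\frac{1}{n}f(a,b)$ yields only $g(a,b,\cdot)=0$ \emph{almost everywhere} when $a\cdot b=0$, so a priori an adversarial $\lambda_j$ sitting in a measure-zero exceptional set could spuriously satisfy $g(a,b,\lambda_j)=1$. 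To close this gap I would strengthen the choice of witnesses: the set $\mathrm{Bad}=\bigcup_{a\cdot b=0}\{\lambda : g(a,b,\lambda)=1\}$, a finite union over the at most $2^{2n}$ rejecting pairs, is $\mu$-null, and since deleting a null set changes none of the expectations used in the pigeon-hole step of Lemma~\ref{theLemma}, each $\lambda_j$ can be chosen outside $\mathrm{Bad}$. With $\lambda_j\notin\mathrm{Bad}$, acceptance on a rejecting pair is impossible, so the scheme is sound.

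Finally I would combine the certificate-length bound $N^1(f)\le 0.003\,n/\log_2 n + O(\log n)$ with the lower bound $N^1(f)\ge 0.007\,n/(\log_2 n+3)-1$ of Proposition~\ref{pro:djn}. Writing $L=\log_2 n$, the gap between the two bounds is of order $n\,(0.004L-0.009)/(L(L+3))$, which dominates the additive $O(\log n)$ term once $n>10^7$ (so $L>23$); the two inequalities are therefore incompatible, yielding the desired contradiction and proving the theorem.
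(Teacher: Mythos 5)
Your proposal is correct and follows essentially the same route as the paper's own proof: assume the negation, invoke Lemma~\ref{theLemma} to get the partition $S_1,\ldots,S_{2n^2}$ with witnesses $\lambda_j$, build an oracle that certifies acceptance by sending the index $j$ plus the full transcript, have both parties locally re-simulate and check consistency, and collide the resulting certificate length with the bound of Proposition~\ref{pro:djn}. One remark: your soundness treatment is in fact \emph{more} careful than the paper's, which asserts $g(a,b,\lambda_{j'}) = \frac{1}{n}f(a,b) = 0$ pointwise even though the correctness condition only forces $g(a,b,\cdot)=0$ almost everywhere; your refinement of choosing every $\lambda_j$ outside the $\mu$-null set $\bigcup_{a\cdot b=0}\{\lambda : g(a,b,\lambda)=1\}$ (a finite union of null sets, which leaves the pigeon-hole step of Lemma~\ref{theLemma} untouched) cleanly closes that small gap.
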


\begin{proof}
The proof proceeds by contradiction. Suppose that the claim of the Theorem is not true for some value of $n$, and consider a protocol for $\overline{BLQMS}$ which violates the claim for this value of $n$. This means that the assumptions of Lemma~\ref{theLemma} hold for the considered protocol on inputs of dimension $n$ and $M\equiv M(n)$. To complete the proof, we will now show that the existence of such a protocol can be used to construct a nondeterministic accepting-outcome oracle and verifier for the Deutsch-Jozsa problem on vectors of size $n$ having nondeterministic-accepting communication complexity violating the lower bound on $N^1(f)$. Consider the following oracle-verifier pair for the Deutsch-Jozsa problem:

\paragraph{The oracle:} For a given accepting input $(a,b)$ with vectors of length $n$ satisfying the Deutsch-Jozsa promise (i.e., $a=b$), the oracle finds the set $S_j \subseteq A_0$ , which satisfies $a\in S_j$ in the partition given by Lemma~\ref{theLemma}, and the corresponding randomness $\lambda_j \in X$. It then performs a simulation of the execution of the $\overline{BLQMS}$ according to the considered protocol, achieving an accepting outcome ($g(a,a,\lambda_j) = 1$.) Finally, the oracle communicates the following string to both Alice and Bob:
\begin{enumerate}
\item The integer value $j \in \{1,\ldots,2n^2\}$.
\item A complete transcript of the messages communicated between Alice and Bob in the simulation of the $\overline{BLQMS}$ protocol for Alice's input $(a,\lambda_j)$ and Bob's input $(a,\lambda_j)$.
\end{enumerate}

\paragraph{The verifier:} Alice and Bob both receive the value of $j$ from the oracle, and use it to compute the corresponding randomness $\lambda_j \in X$. Then, Alice initializes a simulation of her part of the $\overline{BLQMS}$ protocol assuming her input as $(a,\lambda_j)$. Whenever the  $\overline{BLQMS}$ protocol would require her to receive a message from Bob, she reads the treats the corresponding part of the transcript received from the oracle as the message received from Bob. Whenever Alice sends a message in the protocol, she verifies if the message she sent corresponds to that written in the transcript provided by the oracle. If and only if there are no discrepancies and the outcome of Alice in the  $\overline{BLQMS}$ protocol is $1$, the Deutsch-Jozsa verifier returns in ``accepting'' state. Bob simulates the $\overline{BLQMS}$ protocol likewise for the input $(a,\lambda_j)$, trusting messages provided by the oracle for the communication from Alice to Bob, verifying all messages sent from Bob to Alice, and signaling ``accepting'' state if and only if there are no discrepancies and the outcome of Bob in the $\overline{BLQMS}$ protocol is $1$.

\bigskip
We now proceed to show the correctness of the oracle-verifier construction for the Deutsch-Jozsa problem. First, observe that for any $n$-bit Deutsch-Jozsa input $(a,b)$ satisfying $a=b$, the value of $\lambda_j$ prompted by the oracle satisfies $g(a,a,\lambda_j) = 1$, hence both Alice and Bob obtain an outcome of $1$ in the $\overline{BLQMS}$ protocol. The verification process proceeds successfully, and the verifier returns in accepting state for both Alice and Bob. Conversely, suppose that the Deutsch-Jozsa input $(a,b)$ satisfies $a\cdot b = 0$, and that the oracle provides some integer value $j'$ and some message transcript to Alice and Bob. Alice and Bob both compute $\lambda_{j'}$ and perform the simulation of the $\overline{BLQMS}$ protocol for respective inputs $(a,\lambda_{j'})$ and $(b,\lambda_{j'})$. If the transcript provided by the oracle corresponds precisely to a simulation of the considered $\overline{BLQMS}$ protocol for these inputs, then Alice and Bob do not detect any errors and compute their $\overline{BLQMS}$ outputs. Since $g(a,b,\lambda_{j'}) = \frac{1}{n} f(a,b) = 0$, at least one of the parties must return an output of $-1$ in $\overline{BLQMS}$, thus signaling a non-accepting outcome of the verification process.

Now, consider the length of the string communicated by the (correctly functioning) oracle to Alice and Bob for an accepting input $(a,b)$, with $a=b$. Encoding the value of $j$ requires $\log_2 (2n^2) = 2\log_2 n + 1$ bits. By Lemma~\ref{theLemma}, the total number of bits exchanged in the simulation of the $\overline{BLQMS}$ protocol is less than $M(n)$ bits, thus the transcript can be easily encoded using less than $2M(n)$ bits (including markers describing which parts of the message were communicated by Alice and which were communicated by Bob). It follows that the total length of the string provided by the oracle to the verifier is at most $2\log_2 n + 2M(n)$, and therefore this value is lower-bounded by the nondeterministic-accepting communication complexity of the Deutsch-Jozsa problem for inputs of length $n$:
$$2\log_2 n + 2M(n) \geq N^1(f) \geq 0.007 n / (\log_2 n+3) - 1,$$
hence:
$$
M(n) \geq 0.0035 n / (\log_2 n+3) - \log_2 n - 0.5.
$$
However, we have $M(n) = 0.003 n / \log_2 n$ by definition, and we obtain a contradiction for all $n>10^7$, which completes the proof.
\end{proof}

The lower bound on the tail of the number of communicated bits in the protocol, given by the above theorem, obviously also applies to the $BLQMS$ problem, of which $\overline{BLQMS}$ was a restricted case. Thus, we directly obtain lower bounds on the variance and higher moments of  the communication complexity of any protocol for $\overline{BLQMS}$.
\begin{theorem}[Lower bound on BLQMS]
For any protocol solving exactly the $BLQMS$ problem, let $T^{(k)}(n) = \max_{a,b\in\{-1,+1\}^n}\E_\lambda\ T^k(a, b, \lambda)$ be the $k$-th (non-central) moment of the random variable describing the number of communicated bits, taken over the probability distribution of $\lambda$, for a worst-case pair of inputs $a,b$ of length $n$. Then for any even $n>10^7$ the following lower bound holds:
\begin{equation*}
T^{(k)}(n) \geq \frac{1}{2n} M^k(n) = 0.5 \frac{(0.003 n)^{k-1}}{\log_2^k n}.
\vspace{-9mm}
\end{equation*}
\qed
\end{theorem}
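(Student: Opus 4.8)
The plan is to derive the moment bound as an immediate corollary of the tail bound already established in Theorem~\ref{theClaim}, via an elementary truncation (Markov-type) argument, after first checking that the earlier theorem is applicable to the full $BLQMS$ problem. I would begin by noting that $\overline{BLQMS}$ is a restriction of $BLQMS$: any protocol that exactly solves $BLQMS$ in particular preserves condition~\eqref{p11}, and is therefore also a valid protocol for $\overline{BLQMS}$. Consequently Theorem~\ref{theClaim} applies verbatim to the communication-count random variable $T(a,b,\lambda)$ of an arbitrary $BLQMS$ protocol. Fixing an even $n > 10^7$ and setting $M = M(n) = 0.003\,n/\log_2 n$, Theorem~\ref{theClaim} supplies an input pair $(a,b)$ of length $n$ with
$$\mu(\{\lambda : T(a,b,\lambda) \geq M\}) \geq \frac{1}{2n}.$$

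For this particular worst-case pair I would lower-bound the $k$-th non-central moment by discarding all of the probability mass outside the heavy-tail event. Since $T^k \geq 0$ everywhere and $T^k \geq M^k$ on the event $\{\lambda : T \geq M\}$, restricting the integral defining the expectation to that event gives
$$\E_\lambda\, T^k(a,b,\lambda) \geq \int_{\{\lambda : T(a,b,\lambda)\geq M\}} T^k(a,b,\lambda)\, d\mu \geq M^k\, \mu(\{\lambda : T(a,b,\lambda)\geq M\}) \geq \frac{M^k}{2n}.$$

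Finally, since $T^{(k)}(n)$ is defined as the maximum of $\E_\lambda\, T^k$ over all input pairs, and $(a,b)$ is one such pair, the maximum dominates this value, so $T^{(k)}(n) \geq \tfrac{1}{2n} M^k(n)$; substituting $M(n) = 0.003\,n/\log_2 n$ and simplifying yields the closed form in the statement. I do not anticipate any genuine obstacle: the substantive work is done in Theorem~\ref{theClaim}, and what remains is the standard fact that a lower bound on the tail probability at threshold $M$ forces a lower bound of $M^k$ times that probability on the $k$-th moment. The only two points deserving a word of care are the (immediate) reduction showing the theorem transfers from $\overline{BLQMS}$ to $BLQMS$, and the observation that the worst-case maximum over inputs dominates the value at the single adversarial pair produced by Theorem~\ref{theClaim}.
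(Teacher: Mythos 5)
Your proposal is correct and is essentially identical to the paper's own (implicit) argument: the paper treats this theorem as an immediate corollary of Theorem~\ref{theClaim}, noting that any exact $BLQMS$ protocol is in particular an $\overline{BLQMS}$ protocol and that the tail bound forces $\E_\lambda T^k \geq M^k(n)\cdot\mu(\{T \geq M(n)\}) \geq \frac{1}{2n}M^k(n)$, which is precisely your truncation step. The only discrepancy is in the paper's own closed-form simplification, since $\frac{1}{2n}M^k(n) = 0.0015\,\frac{(0.003\,n)^{k-1}}{\log_2^k n}$ rather than $0.5\,\frac{(0.003\,n)^{k-1}}{\log_2^k n}$, a constant-factor slip in the statement that your derivation (which stops at the correct bound $\frac{1}{2n}M^k(n)$) does not inherit.
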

In particular, it follows that the second non-central moment has an asymptotic growth rate of $T^{(2)}(n) = \Omega(n / \log^2 n)$, and $\limsup_{n\to+\infty} T^{(2)}{n} = +\infty$. In this way, we obtain the main result of our paper:
\begin{theorem}
If for some protocol for the $BLQMS$ problem the random variable describing the number of communicated bits has finite expectation for any input pair of arbitrary dimension, its variance must be unbounded.
\end{theorem}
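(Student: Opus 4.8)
The plan is to reduce the statement to the growth of the second non-central moment already established, combined with the elementary identity relating variance, the second moment, and the squared expectation. For a fixed input pair $(a,b)$ of dimension $n$ we have
$$\V_\lambda\ T(a,b,\lambda) = \E_\lambda\ T^2(a,b,\lambda) - \left(\E_\lambda\ T(a,b,\lambda)\right)^2,$$
so the second non-central moment decomposes as the sum of the variance and the squared expectation. The preceding theorem shows that the worst-case second moment $T^{(2)}(n) = \max_{a,b}\E_\lambda\ T^2(a,b,\lambda)$ grows at least like $\Omega(n/\log^2 n)$, hence $\limsup_{n\to\infty} T^{(2)}(n) = +\infty$. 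The guiding idea is that an unbounded second moment cannot be reconciled with a simultaneously bounded variance and bounded expectation.

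I would argue by contradiction. Suppose the variance of the communication cost stays bounded, say $\V_\lambda\ T(a,b,\lambda) \leq V$ for all input pairs of all dimensions, while the expectation remains finite for every input. For each even $n > 10^7$ pick the worst-case input pair $(a_n, b_n)$ attaining (or approaching) $T^{(2)}(n)$. The variance decomposition for this pair gives
$$\E_\lambda\ T^2(a_n,b_n,\lambda) = \V_\lambda\ T(a_n,b_n,\lambda) + \left(\E_\lambda\ T(a_n,b_n,\lambda)\right)^2 \leq V + \left(\E_\lambda\ T(a_n,b_n,\lambda)\right)^2.$$
Since the left-hand side diverges with $n$, the squared expectation must itself be unbounded along this sequence, forcing the expected communication cost to diverge and contradicting the boundedness of the expectation. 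Equivalently, whenever the expectation stays bounded the variance must absorb the divergence of the second moment, and is therefore unbounded.

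The main obstacle is one of bookkeeping rather than of a new idea: I must be careful about the distinction between the pointwise finiteness of $\E_\lambda\ T$ for each fixed input and the uniform control needed to convert the divergence of $T^{(2)}(n)$ into divergence of the variance. The clean way to handle this is to phrase the conclusion as the impossibility of simultaneously bounding both the expectation and the variance uniformly over inputs and dimensions; if instead one only assumes pointwise-finite expectations, I would invoke Chebyshev's inequality directly on the heavy-tail estimate $\mu(\{\lambda : T(a,b,\lambda) \geq M(n)\}) \geq \frac{1}{2n}$, which yields $\V_\lambda\ T \geq (M(n) - \E_\lambda\ T)^2/(2n)$ and shows that, for the worst-case input of each dimension, either the expectation grows at least like $\Omega(n/\log n)$ or the variance grows like $\Omega(n/\log^2 n)$. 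In both branches a protocol with bounded expected communication is forced to have divergent variance, which is exactly the claim.
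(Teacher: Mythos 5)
Your proposal is correct and follows essentially the same route as the paper, which presents this theorem as an immediate corollary of the preceding moment bound $T^{(2)}(n) = \Omega(n/\log_2^2 n)$ via exactly your decomposition $\V_\lambda\, T = \E_\lambda\, T^2 - \left(\E_\lambda\, T\right)^2$, offering no further argument. Your closing refinement is a genuine improvement in rigor that the paper glosses over: under the literal pointwise reading of ``finite expectation,'' the conclusion does not follow from the moment bounds alone (they are consistent with a hypothetical protocol that always communicates exactly $M(n)$ bits, which has zero variance and pointwise-finite but growing expectation), so your Chebyshev dichotomy --- for worst-case inputs either $\E_\lambda\, T = \Omega(n/\log n)$ or $\V_\lambda\, T = \Omega(n/\log^2 n)$ --- is precisely the correct rigorous form of the claim, yielding the theorem under its intended reading of uniformly bounded expected communication.
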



\emph{Conclusions.} Using the methods of nondeterministic communication complexity, we have derived a lower bound on the communication complexity of the Binary Local Quantum Measurement Simulation, showing that any protocol solving this problem exactly can at most have a finite expected number of communicated bits, whereas all higher statistical moments of this quantity are necessarily  unbounded with respect to the input size (which is directly related to the dimension of the space of quantum states, on which the measurements are performed). We pose as an open question whether there exists such a protocol working with finite expected communication for an arbitrary quantum state.

\emph{Acknowledgements.} We would like to thank Caslav Brukner and Marek \.Zukowski for helpful discussions. MM is supported by the International PhD Project: ``Physics of future quantum based information technologies'' grant MPD/2009-3/4, the NCN grant No.\ 2012/05/E/ST2/02352 and by the project QUASAR.

\bibliography{SimQC}

\end{document}